\newtheorem{fact}{Fact}
\newcommand{\Log}{\mbox{{\sf L}}}
\newcommand{\NL}{\mbox{{\sf NL}}}
\newcommand{\pt}{\mbox{$\sf \oplus$}}
\newcommand{\SL}{\mbox{{\sf SL}}}
\newcommand{\ACo}{\mbox{{\sf AC}$^1$}}
\newcommand{\commentout}[1]{}
\begin{document}
\title{Planarity Testing Revisited}
\author{Samir Datta, Gautam Prakriya}
\institute{
  Chennai Mathematical Institute\\
  India\\
  \email{\{sdatta,gautam\}@cmi.ac.in}
}
\maketitle
\begin{abstract}
Planarity Testing is the problem of determining whether a given graph
is planar while planar embedding is the corresponding construction problem.
The bounded space complexity of these problems has been
determined to be exactly Logspace by Allender and Mahajan \cite{AM00} with the aid of Reingold's
result \cite{Rei08}. Unfortunately, the algorithm is quite daunting and 
generalizing it to say, the bounded genus case seems a tall order. 

In this work, we present a simple  planar embedding algorithm running in logspace. 
We hope this algorithm will be more amenable to generalization. The algorithm is based
on the fact that $3$-connected planar graphs have a unique embedding, a  variant
of Tutte's criterion on conflict graphs of cycles and an explicit 
change of cycle basis.

We also present a logspace algorithm to find obstacles to planarity, viz. a
Kuratowski minor, if the graph is non-planar. To the best of 
our knowledge this is the first logspace algorithm for this problem.
\end{abstract}

\section{Introduction}
Planarity Testing, the problem of determining whether a given graph
is planar (i.e. the vertices and edges
can be drawn on a plane with no edge intersections except at their 
end-points) is a fundamental problem in algorithmic graph theory. 
Along with the problem of actually obtaining a planar embedding, it is a 
prerequisite for many an algorithm designed to work specifically for
planar graphs.

Our focus is on the bounded space complexity of the planarity embedding 
problem because we know
that many graph theoretic problems like reachability \cite{BTV09}, 
perfect matching \cite{DKR10,DKT10},
and even isomorphism \cite{DLN08,DLNTW09} 
have efficient bounded space algorithms when provided graphs embedded on 
the plane.

Almost a decade ago, building on previous work by Ramachandran and Reif
\cite{RR94}, Allender and Mahajan \cite{AM00,AM04}, proved that Planarity
Testing is contained in $\SL$ and is $\Log$-hard. With Reingold's result
\cite{Rei08} proving $\SL = \Log$, this gave a tight complexity
theoretic classification for the problem. This seemed to be the end of
the story as far as the problem of Planarity Testing vis-a-vis the logspace
world was concerned. 

The only catch
was, the algorithm described in the paper was quite complicated - in fact
a simpler {\SL} algorithm was listed as one of the open questions in 
\cite{AM00}. We would be satisfied with a complicate algorithm
if all we were concerned with was pigeonholing the complexity of
the problem. Planarity testing, however, happens to be a fundamental task in 
Topological Graph theory and a first step towards problems like 
toroidicity testing and bounded genus testing (also listed as open problems in
\cite{AM00}).
Therefore, if we are to make any progress towards a tighter classification of 
these problems, for which no efficient bounded space algorithm is known,
especially, if making a non-blackbox use of a planarity algorithm, 
it is advisable to search for a less daunting algorithm. This work
is the result of this search.

In addition to its simplicity, the interplay between the properties of
$3$-connected planar graphs lends a certain elegance to the algorithm,
at least in our, necessarily, biased eyes.

We also give a logspace algorithm to identify a Kuratowski minor if the 
graph is non-planar. 
\section{Related Work}
Here we survey the related work very briefly - see the paper by Allender and
Mahajan \cite{AM00} for a detailed survey.
The algorithmic aspects of Planarity Testing have been studied since the
inception of Computer Science. It is clear that as far as sequential
computation is concerned, a linear algorithm is optimal. Such algorithms
include the one by Hopcroft and Tarjan \cite{HT74}. The next result
concerns parallel models of computation. Ramachandran and Reif proposed a
very complicated algorithm which worked in logarithmic time and performed 
almost linar work and can be interpreted as placing the problem in the
complexity class \ACo. The final frontier was bounded space computation
and initial sorties had already taken place in the early eighties.
Reif \cite{Reif84} proved that planarity testing for degree $3$ graphs
is in the {\SL}-hierarchy while Ja'Ja' and Simon \cite{JS82a,JS82b} proved that
planarity testing is in the {\NL}-hirerarchy. After Nisan and TaShma's result
\cite{NT95} and the Immerman-Szelpcsenyi Theorem \cite{Immerman88,Szelepcsenyi88} these
bounds become {\SL} and {\NL} respectively. The paper by Allender and Mahajan
completed the campaign when they proved that Planarity Testing was
{\SL}-complete and Reingold's result \cite{Rei08} set the final seal
by proving ${\SL} = {\Log}$.

\section{Outline of Algorithms}
\subsection{Outline of Planar Embedding Algorithm}
We now motivate and informally describe Algorithm~\ref{algo:planarityAlgo}.
It is easy to to see that for a planar graph $G$ and any of its cycles,
for the conflict graph w.r.t. the cycle is bipartite - the bipartitions being
given by the bridges that are placed inside the cycle and those that are
placed outside it, in some planar embedding. 

Conversely,
Tutte \cite{Tut58} has shown that a graph is planar iff the conflict graph
with respect to \emph{every} cycle is bipartite. But since a graph potentially
has exponentially many cycles a direct application of this result does
not yield even a polynomial time algorithm. Thus we ought to seek a small
set of cycles such that restricting our attention to the bipartiteness
of their conflict graphs suffices to yield a planarity test. The set of
fundamental cycles w.r.t. some spanning tree seems to be a natural candidate.
Unfortunately, there are non-planar graphs which have a spanning tree such
that the conflict graph w.r.t. each of the fundamental cycles is bipartite.
e.g. in a $K_5$ with a spanning tree being a star centered at any vertex,
the conflict graph w.r.t. any fundamental cycle has a single bridge.

Suppose, however that we are able to find a deterministic algorithm that 
constructs a valid planar embedding whenever its input is 
planar graph and either fails when supplied with a non-planar graph or outputs 
a non-planar embedding.
 Now, since, its easy to check (via verification of Euler's
formula) that an \emph{embedding} is planar, we can eliminate any non-planar
graphs at this stage.

Thus it suffices to focus on finding an embedding algorithm that works 
correctly under the promise that the given graph is planar. 
We can without much loss of generality, strengthen the promise and assume 
that the graph is, in addition to being planar, also $3$-connected.
This is because finding the triconnected components is known to be in
logspace (see Lemma~\ref{lemma:dntw} and so is patching together the 
given planar embeddings of triconnected graphs 
(see Lemma~\ref{lemma:biconnToTriconn}).

Concentrating on $3$-connected planar graphs we observe that:
\begin{itemize}
\item The graph has a unique planar embedding. 
\item The conflict graphs w.r.t. any cycle is connected 
(see Section~\ref{sec:conflictGraph}),
which enables us to bipartition the bridges in a unique way so that bridges
in a partition lie on one side of the cycle in question.
\item Ideally we would like to pick a small set of cycles and determine
which edges lie inside them and somehow piece together the combinatorial
embedding from this information. A natural choice for such a set is 
the set of fundamental cycles w.r.t. an arbitrary spanning tree. But the
following two problems crop up:
	\begin{itemize}
	\item Though we have a bipartition of bridges for each fundamental cycle
		, it is not clear which partition is mapped inside and 
		which outside.
	\item How do we actually piece together a combinatorial embedding
		once we know this information?
	\end{itemize}
\item The first problem can be solved if we knew at least one face of the 
unique embedding, because then we can just think of the face as the external
face and stipulate that every edge (not part of the face) is contained inside
it. Thus for every fundamental cycle, the bridge containing all the edges
of the external face would lie outside the cycle - this fixes which bridges
lie inside and which outside. But Corollary~\ref{cor:fundFace} tells us
that for the graphs in question there is always a fundamental cycle which
is a face. This, combined with Fact~\ref{fact:faceIndNonsep} ensures that
we can find such a face in logspace.
\item We solve the second problem in Section~\ref{section:embed}, where
we first show that finding a solution is equivalent to a change of the cycle
basis from fundamental cycles to faces. Then we show an explicit way to 
perform this inversion in logspace.
\item Finally, we know that a $3$-connected graph is planar iff every edge 
lies on exactly two induced non-separating cycles (see 
Fact~\ref{fact:faceIndNonsep}). Thus, 
we can use this criterion to check for planarity of $G$.
\end{itemize}

\subsection{Outline of Algorithm to find Kuratowski Minors}
In Section~\ref{section:kuratowski} we describe an algorithm 
(Algorithm~\ref{algo:kuratowskiAlgo}) to find
a Kuratowski (i.e. a $K_5$ or $K_{3,3}$) minor in a non-planar graph.
The algorithm indentifies a cycle with a non-bipartite
conflict graph. It then finds
an induced odd cycle in this conflict graph. Finally, it contracts some
of the bridges and vertices of attachment of these bridges to yield
a Kuratowski minor.

\section{Definitions and Preliminaries}
We assume familiarity with basic complexity theory in general and
bounded space classes in particular, as described
in any standard text e.g \cite{AroraB09}. We will also assume
familiarity with graph theory as described in a text like 
\cite{Diestel05,west00}. Below, we explicate all non-standard material
we will have occasion to use.

\begin{definition} The bridges of a cycle $C$ consist of:
\begin{itemize}
\item For every connected component $X$ of $G \setminus C$, the induced graph
$G[X \cup A_X]$ where $A_X \subseteq C$ are the vertices of $C$ adjacent to
some vertex of $X$ (the so called points of attachment).
\item The chords of $C$ - here the endpoints of $C$ are its points of attachment
\end{itemize}
\end{definition}
\begin{definition}
Two bridges $B_1,B_2$ of a cycle $C$ conflict iff either of the following
conditions hold:
\begin{itemize}
\item $a_i,a'_i$ are two points of attachment of $B_i$ w.r.t. $C$ 
for $i \in \{1,2\}$ such that they occur in the order $a_1,a_2,a'_1,a'_2$ 
along the cycle $C$.
\item $B_1,B_2$ have three common points of attachment w.r.t. cycle $C$.
\end{itemize}
We can extend the definition of conflict to sets of bridges 
$\mathcal{B}_1,\mathcal{B}_2$ by the existence of two sets of 
conflicting vertices of attachment in the above sense, belonging to
(not necessarily the same bridge of) $\mathcal{B}_1,\mathcal{B}_2$ respectively.
The conflict graph $H_C(G)$ of a graph w.r.t. a cycle $C$ is formed by taking the bridges of
$C$ as vertices and joining two vertices by an edge iff they conflict.
\end{definition}
\begin{definition} Given a spanning tree $T$ of a biconnected graph $G$,
and an edge $e \in E(G) \setminus E(T)$, the graph $G \cup e$ contains a 
unique cycle $C(e)$ called the fundamental cycle of $e$. We say that a face
of an embedded planar graph is fundamental if it is a fundamental cycle
of some non-tree edge (with respect to some fixed spanning tree).
\end{definition}
The following is an easy consequence of Reingold's result \cite{Rei08}
since we can count the number of components in $(G \cup e) \setminus e'$.
We single it out of a set of similar elementary graph computations
summarized in Section~3.2 of \cite{AM00} which can be done in logspace
because it is of special significance for us.
\begin{fact}\label{fact:fundCycle}
The list of edges in each fundamental cycle of $G$ w.r.t. a spanning tree $T$
can be obtained by a logspace transducer.
\end{fact}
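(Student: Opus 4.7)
The plan is to reduce the problem of listing $E(C(e))$ to a polynomial number of undirected connectivity queries, each of which can be answered in logspace by Reingold's theorem. The structural observation driving everything is that $C(e)$ equals the unique cycle of $T \cup \{e\}$, namely the edge $e$ together with the unique $T$-path between the two endpoints of $e$. Consequently, an edge $e' \in E(G)$ belongs to $C(e)$ in exactly one of two ways: either $e' = e$, or $e' \in E(T)$ and $e'$ lies on this $T$-path.

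The first case is trivially decidable in logspace. For the second, I would use the following characterisation: a tree edge $e'$ lies on $C(e)$ iff $(T \cup \{e\}) \setminus \{e'\}$ is connected. The justification is short. The graph $T \cup \{e\}$ is connected and has $C(e)$ as its only cycle; deleting an edge of $C(e)$ merely opens that cycle and leaves a connected graph, while deleting any tree edge outside $C(e)$ produces a forest with two components. Thus the test reduces to a single undirected connectivity query, or equivalently to checking whether the number of components of $(T \cup \{e\}) \setminus \{e'\}$ equals one.

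The transducer therefore iterates over all pairs $(e, e')$ with $e \in E(G) \setminus E(T)$ and $e' \in E(G)$, printing $e'$ in the list associated with $e$ whenever either $e' = e$, or $e' \in E(T)$ and the above connectivity test succeeds. Membership in $E(T)$ is decidable in logspace (the tree $T$ is part of the input or can be produced in logspace), and the auxiliary graph $(T \cup \{e\}) \setminus \{e'\}$ is specified implicitly by the input together with the pointers $e$ and $e'$, so no additional workspace is needed to describe it. The outer iteration uses only logarithmic counters, and each connectivity query is handled by Reingold's algorithm in logspace, so the whole transduction runs in logspace. I do not anticipate a genuine obstacle here; the content of the fact is essentially the observation that the defining property of a fundamental cycle is captured by undirected reachability queries.
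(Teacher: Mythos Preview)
Your proposal is correct and is essentially the same argument the paper gives: the paper also invokes Reingold's theorem, noting that membership of $e'$ in $C(e)$ is decided by counting the components of $(T \cup e) \setminus e'$. Your write-up simply spells this out in more detail.
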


\begin{fact}\label{fact:faceIndNonsep} (Proposition~4.2.10 and Theorem~4.5.2 \cite{Diestel05})
 The faces of a $3$-connected planar graph $G$ are exactly the
induced non-separating cycles of $G$. 
Further, a $3$-connected graph
is planar iff every edge lies on exactly two induced, non-separating cycles.
\end{fact}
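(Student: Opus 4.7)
The statement has two parts, and I would prove them in the order Diestel does, building the second on the first. For the first part (faces of a 3-connected planar graph are exactly its induced non-separating cycles), I would fix a planar embedding of $G$ and argue both inclusions. For the forward direction, the fact that $G$ is $2$-connected (implied by $3$-connectedness) ensures that every face boundary is a cycle $C$. To see that $C$ is induced, suppose for contradiction that $C$ has a chord $uv$; then $C \cup \{uv\}$ bounds two subregions of the face $F$, and the two vertices $u,v$ form a $2$-separator of $G$ because any $u$--$v$ path in $G - \{u,v\}$ would have to cross $C$ (using that the face $F$ is empty of vertices on its interior), contradicting $3$-connectedness. To see $C$ is non-separating, note that $G - V(C)$ embeds entirely in the open disk complementary to $F$ (or inside $F$), which is connected, and hence $G - V(C)$ is connected.

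For the reverse inclusion, let $C$ be an induced non-separating cycle in the embedded graph. Since $G - V(C)$ is connected, all vertices outside $C$ lie on the same side of $C$ in the embedding (using the Jordan curve theorem to note that a connected subgraph cannot straddle the Jordan curve $C$). So one of the two sides of $C$ contains no vertex of $G$; since $C$ is induced, it contains no edge either. Hence that side is a face, proving $C$ is a face boundary.

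For the second part (planarity iff every edge lies on exactly two induced non-separating cycles), the forward direction is immediate from part (1) together with the standard fact that in a planar embedding every edge is on the boundary of exactly two faces. For the converse, the natural strategy is to designate the collection $\mathcal{F}$ of induced non-separating cycles as a candidate face set, and then appeal to MacLane's characterization: a graph is planar iff its cycle space has a basis in which every edge is contained in at most two basis elements. The hypothesis that every edge lies on exactly two members of $\mathcal{F}$ is exactly what is needed to extract such a $2$-basis from $\mathcal{F}$ (drop one element to obtain $|E| - |V| + 1$ cycles, the right dimension), and thus to conclude planarity.

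The main obstacle is the converse of the first part, namely showing that any induced non-separating cycle is actually a facial cycle in the embedding. The argument I sketched hides some work in the Jordan-curve step: one must carefully verify that the connectedness of $G - V(C)$ forces all remaining vertices onto a common side of $C$, and then that this side is vertex-free as well as edge-free. The other non-trivial step is invoking MacLane's theorem for the converse direction of part (2); without it, one would have to construct the embedding directly from the cycle collection $\mathcal{F}$, which is essentially the same amount of work.
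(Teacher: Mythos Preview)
The paper does not prove this statement at all: it is recorded as a \emph{Fact} with a citation to Diestel (Proposition~4.2.10 and Theorem~4.5.2), so there is no proof in the paper to compare your sketch against. Your outline is broadly in the spirit of Diestel's treatment, but it contains one genuine error and one gap worth flagging.

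The error is in the forward direction of the first part. You write that a chord $uv$ of a facial cycle $C$ would make ``$C \cup \{uv\}$ bound two subregions of the face $F$''. This is backwards: $F$ is a face, so its interior contains \emph{no} vertices or edges of $G$; the chord $uv$ must be drawn on the side of $C$ \emph{opposite} to $F$. The actual argument takes place on that other side: the chord together with one $u$--$v$ arc of $C$ forms a Jordan curve, and one argues that the interior vertices of the two arcs of $C$ end up in different regions and hence are separated in $G$ by $\{u,v\}$ (using that $F$ is empty, so nothing can route through it). This takes more care than your one-line claim, and as written your geometry is simply wrong.

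The gap is in the converse of the second part. Invoking MacLane is the right strategy, but to extract a $2$-basis you need not only that every edge lies in exactly two induced non-separating cycles, but also that these cycles \emph{generate} the cycle space, so that after dropping one you genuinely have a basis of the correct size $|E|-|V|+1$. That generation statement is a separate, nontrivial theorem of Tutte (also in Diestel) and must be invoked explicitly; without it your step ``drop one element to obtain $|E|-|V|+1$ cycles, the right dimension'' is unjustified.
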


\begin{proposition}\label{prop:dualToPrimalEmbed}
Given a the cyclic order of vertices in every face of a biconnected
embedded graph, it is possible to construct in logspace, a combinatorial 
embedding of the graph.
\end{proposition}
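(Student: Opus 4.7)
The plan is to reconstruct, for each vertex $v$, the cyclic order of edges incident to $v$ from the given face descriptions, since a combinatorial embedding is precisely the collection of such rotations.

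For each vertex $v$ and each face $F$ through $v$, the given cyclic order of $F$ singles out the pair of edges at $v$ that are consecutive in the boundary walk of $F$ through $v$; call such a pair \emph{linked at $v$ by $F$}. Because $G$ is biconnected and plane, every edge lies on exactly two faces, so each edge at $v$ is linked at $v$ to exactly two others (one through each of its two incident faces). Thus the graph $L_v$ on edges-incident-to-$v$, with the linking relation, is $2$-regular. I claim $L_v$ is a single cycle; this cycle is exactly the rotation at $v$ to be output. The single-cycle claim follows from biconnectivity: if $L_v$ split into $k\ge 2$ cycles, the faces around $v$ would split into $k$ independent rotations, exhibiting $v$ as a cut vertex of the embedded graph (removing $v$ would disconnect the corresponding face-patches), contradicting biconnectivity.

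To compute the rotation in logspace, I would define, for a triple $(v,e,F)$ with $e$ incident to $v$ and $F$ a face of $G$ through $v$ containing $e$, the function $\mathrm{next}(v,e,F)$ returning the unique edge at $v$ linked to $e$ by $F$. A single pass over the given cyclic list of $F$ finds $v$ and reads off its two neighboring edges in the walk, one of which is $e$; the other is $\mathrm{next}(v,e,F)$. To enumerate the rotation at a fixed $v$, pick the lexicographically smallest edge $e_0$ at $v$ and either of its two incident faces $F_0$, then iteratively set
\[
e_{i+1}=\mathrm{next}(v,e_i,F_i), \qquad F_{i+1}=\text{the other face through }v\text{ containing }e_{i+1},
\]
stopping when $e_i = e_0$. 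Determining $F_{i+1}$ requires only a logspace scan of all face lists to find the unique face $\ne F_i$ containing both $v$ and $e_{i+1}$. Repeating the procedure for every $v$ produces the full combinatorial embedding.

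The space usage is clearly logarithmic, since only the current vertex, current edge, current face, and a short counter need to be stored. The one nontrivial correctness point, and the main obstacle in the proof, is the single-cycle property of $L_v$: it must be argued from biconnectivity plus the assumption that the input really comes from a planar embedding, not merely an arbitrary assignment of cyclic sequences to vertex sets. Once that is in hand, the traversal above deterministically recovers the unique rotation at $v$ up to the choice of starting edge and orientation, which is all that is required of a combinatorial embedding.
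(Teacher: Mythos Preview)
Your proposal is correct and is essentially the paper's own argument, phrased in slightly different language: the paper's ``angles'' $(u,v,w)$ are exactly your linked pairs of edges at $v$, and the paper's angle-adjacency graph restricted to a fixed middle vertex $v$ is your $L_v$. If anything, your write-up is more complete: you spell out the logspace traversal explicitly and you address the single-cycle property of $L_v$ (via biconnectivity), a point the paper's terse proof leaves implicit.
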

\begin{proof}
Call an ordered triplet,$T$ of vertices $(u,v,w)$ an angle of an embedded
graph if $(u,v),(v,w)$ are consecutive edges on some face of the embedding.
Call two triplets $T_i = (u_i,v_i,w_i)$ for $i \in \{1,2\}$ adjacent if
$v_1 = v_2$ and either $w_1 = u_2$ or $u_1 = w_2$. Then it is
clear that the undirected graph with angles as vertices and angle adjacencies
as edges, forms a set of disjoint\footnote{since any angle has a unique
middle vertex} cycles\footnote{since every angle is adjacent to 
exactly two angles on a vertex}. The ordering on vertices induced by the
angles is exactly the combinatorial embedding. \qed
\end{proof}

\section{Planarity Testing}
\subsection{Reduction to the Triconnected case}
\begin{lemma}\label{lemma:dntw} (Lemma 3.3 \cite{DNTW09}) 
The triconnected components of a graph can be obtained in logpace
\end{lemma}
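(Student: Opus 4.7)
The plan is to compute the decomposition in two stages. First, I would find the biconnected components of $G$, which is a standard logspace computation via Reingold's theorem: a cut vertex $v$ is detected by checking whether $G \setminus \{v\}$ gains a connected component relative to $G$, and two edges share a biconnected component iff they lie on a common cycle, a condition expressible as a reachability query (does a path between the endpoints survive after deleting any single internal vertex). Each of these tests lies in $\Log$.

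Restricted to a biconnected graph $B$, the triconnected components are obtained by splitting $B$ at its separation pairs in the sense of Hopcroft--Tarjan. Separation pairs themselves are easy to enumerate in logspace: for each pair $\{u,v\}$, test whether $B \setminus \{u,v\}$ is disconnected by logspace reachability. The heart of the proof is then a \emph{parallel} (rather than iterative) characterization of which edges belong to the same triconnected component, so that one does not have to simulate a sequence of splits. I would use Tutte's 3-block equivalence: two edges $e,e'$ lie in the same 3-block iff no separation pair $\{u,v\}$ places $e$ and $e'$ in different, non-trivial ``sides'' of $B\setminus\{u,v\}$. Since this is a conjunction of reachability tests quantified over all $O(n^2)$ separation pairs, the equivalence classes can be produced in logspace, and each class becomes the edge set of one candidate triconnected component.

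Finally, each component needs the correct virtual edges attached: for every separation pair $\{u,v\}$ lying on the boundary between two 3-blocks, a virtual edge $uv$ is added to each of them. Once the 3-block classes and their incident separation pairs are known, classifying each block as a polygon, bond, or 3-connected simple graph (using the degree and multiplicity pattern of its edge set) and placing virtual edges is straightforward local bookkeeping, and the output SPQR-tree structure follows.

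\textbf{The main obstacle} I expect is precisely the virtual-edge bookkeeping when several 3-blocks share the same separation pair, as in long chains of bonds or polygons, where a naive rule would over-split a polygon into bonds or vice versa. This is exactly the subtlety that \cite{DNTW09} handles carefully, and the part I would need to imitate most closely; modulo this care, the entire computation is a bounded number of logspace reachability calls and so remains in $\Log$.
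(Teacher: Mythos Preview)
The paper does not prove this lemma at all: it is stated as a citation of Lemma~3.3 of \cite{DNTW09} and used as a black box. There is therefore no ``paper's own proof'' to compare your sketch against; the authors simply import the result.

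Your sketch is a plausible outline of how such a logspace decomposition could go, and indeed you explicitly say the delicate part (virtual-edge bookkeeping and avoiding over-splitting polygons/bonds) is where you would defer to \cite{DNTW09}. That is fine as a high-level plan, but note that your claimed 3-block equivalence test---``$e$ and $e'$ are in the same 3-block iff no separation pair places them in different non-trivial sides''---is not quite right as stated: in the SPQR decomposition, two real edges can lie in the same rigid component yet be separated by some separation pair of $B$ (namely one that cuts off a dangling polygon or bond, with the rigid component entirely on one side but the pair itself inside it). The correct logspace criterion in \cite{DNTW09} is more careful about which separation pairs are ``relevant'' splits, and this is precisely the obstacle you flag. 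So your proposal is an honest pointer to the cited proof rather than an independent one, which matches how the present paper treats the lemma.
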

It is a well known fact that if a graph is non-planar then one of it's $3$-connected components 
is non-planar. \cite{west00}

\begin{lemma}\label{lemma:biconnToTriconn}
Given a combinatorial planar embedding of the triconnected components of a graph, it is possible to 
obtain the biconnected planar embedding of the graph in logspace.
\end{lemma}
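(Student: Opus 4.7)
The plan is to assemble the biconnected planar embedding of $G$ by gluing the given triconnected embeddings along their shared separation pairs. By Lemma~\ref{lemma:dntw}, the triconnected components and the virtual edges identifying each shared separation pair are computable in logspace; these virtual edges organize the components into a tree $\mathcal{T}$, in which two components are adjacent whenever they share a virtual edge (with bonds/P-nodes handled in the standard way). By Proposition~\ref{prop:dualToPrimalEmbed}, it then suffices to compute, for each face of the biconnected embedding, the cyclic sequence of vertices along its boundary.

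Each face $F$ of the biconnected graph is naturally a union of face-fragments, one per triconnected component it visits, obtained by walking in $\mathcal{T}$: start at any real edge $e$ on $F$, lying on a unique face $F_0$ of some component $C_0$; follow the boundary of $F_0$ until reaching a virtual edge $e'$; jump to the matching face in the adjacent component of $\mathcal{T}$ lying on the appropriate side of $e'$; and continue until returning to $e$. Implementing this walk requires only the currently tracked component, face within that component, and edge, which fits in $O(\log n)$ space; the walk terminates because the boundary of $F$ is finite and, along each fragment, the walk follows a deterministic rotation determined by the local embedding.

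The main obstacle is verifying that ``the appropriate side'' of each virtual edge is consistently defined across the whole traversal, since the given embeddings of the triconnected components may be independently oriented and the two faces of $C'$ flanking the virtual edge must be distinguished in a manner compatible with the face-fragment already walked in $C$. To handle this, we fix a global orientation by rooting $\mathcal{T}$ at an arbitrary component and propagating a consistent ``clockwise'' direction outward from the root; the orientation at any reached component can then be recovered on the fly by reconstructing the root-to-component path in $\mathcal{T}$, which is a logspace computation because $\mathcal{T}$ is a tree and undirected reachability is in logspace by Reingold's theorem~\cite{Rei08}. Applying Proposition~\ref{prop:dualToPrimalEmbed} to the resulting cyclic face-orderings yields the desired combinatorial embedding of the biconnected graph.
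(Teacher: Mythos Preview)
Your argument is correct, but it takes a different route from the paper. The paper works directly at the level of vertex rotations: using the triconnected-component/separating-pair tree of \cite{DNTW09}, at each separating pair $\{u,v\}$ it simply concatenates the rotations contributed by the incident components, placing the components around $u$ in their tree order and around $v$ in the reverse order, which forces the pieces to nest correctly. You instead reconstruct the \emph{faces} of the biconnected embedding by walking face-fragments across virtual edges in the SPQR-like tree, and only afterwards invoke Proposition~\ref{prop:dualToPrimalEmbed} to turn the cyclic face sequences into a rotation system. What your approach buys is an explicit treatment of orientation consistency (rooting $\mathcal{T}$ and recomputing the flip along the root-to-node path on demand, which is indeed a logspace subroutine), a point the paper's short proof leaves implicit. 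Conversely, the paper's rotation-concatenation is more direct and avoids the detour through Proposition~\ref{prop:dualToPrimalEmbed}; the ``around $u$ in tree order, around $v$ in reverse order'' rule is precisely what your orientation propagation enforces, stated combinatorially at the vertices rather than topologically at the faces.
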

\begin{proof}
Consider the tri-connected component , separating pair tree.
This tree can be constructed in logspace \cite{DNTW09}.
We will use the tree as an index to construct the embedding of the graph, given 
the embedding of it's $3$ components.  
Given a vertex in a tri-connected component,we can obtain the clockwise order 
of the edges around the vertex using the given combinatorial embedding.

Given a  node in the tree corresponding to separating pair $u,v$ we arrange the tri-connected components 
around $u$ in the order they appear in the tree and around $v$ in the opposite order. This gives us an ordering of 
edges around each vertex in the graph.  
We repeat this procedure for each of the nodes corresponding to a separating-pair in the tree.
This gives us an embedding of the graph. To move from biconnected graphs to general graphs, see \cite{AM00}.\qed
\end{proof}

\subsection{The Conflict Graph}\label{sec:conflictGraph}
\begin{lemma}\label{lemma:conflictGraph}
 Given a $3$-connected planar graph $G$ and an arbitrary cycle
$C$ in the graph the conflict graph $H_C(G)$ is bipartite and connected.
\end{lemma}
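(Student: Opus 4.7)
The plan is to establish the two assertions separately, using planarity for bipartiteness and $3$-connectedness for connectedness.

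For \emph{bipartiteness}, I would fix any planar embedding of $G$ and classify each bridge of $C$ as ``inside'' or ``outside'' according to which side of $C$ it is drawn. Two bridges on the same side cannot conflict: interleaved attachment points would force their spanning paths to cross on that disk, and three common attachment points would similarly force a crossing. Hence this classification gives a proper $2$-coloring of $H_C(G)$.

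For \emph{connectedness}, I would argue by contradiction using $3$-connectedness. Suppose $H_C(G)$ has at least two components; let $\mathcal{B}_1$ be one of them and let $\mathcal{B}_2$ denote the union of all remaining bridges. By the set-extension of the conflict definition, no pair of attachments of $\mathcal{B}_1$ interleaves with a pair of attachments of $\mathcal{B}_2$, and the two sets share at most two common attachment vertices on $C$. A short analysis of the cyclic arrangement of attachment points along $C$ then produces two vertices $u, v \in V(C)$ such that every attachment of $\mathcal{B}_1$ lies in one closed arc of $C$ from $u$ to $v$ while every attachment of $\mathcal{B}_2$ lies in the other. Any path in $G$ from an internal vertex of a bridge in $\mathcal{B}_1$ to an internal vertex of a bridge in $\mathcal{B}_2$ must therefore traverse $u$ or $v$, so $\{u, v\}$ is a $2$-cut of $G$, contradicting $3$-connectedness.

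The main obstacle I expect is the case analysis needed to produce and validate the separating pair $\{u, v\}$. In particular, one must handle (i) components consisting solely of chords, where one appeals to $|V(C)| \geq 4$ and argues that $\{u, v\}$ isolates a non-empty arc of $C \setminus \{u, v\}$ rather than an interior vertex of a bridge, and (ii) the borderline case where $\mathcal{B}_1$ and $\mathcal{B}_2$ share exactly two attachment vertices, in which those two vertices themselves serve as the cut. Once the existence of the $2$-cut is verified in each of these cases, the contradiction with $3$-connectivity, and hence the connectedness of $H_C(G)$, is immediate.
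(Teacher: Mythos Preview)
Your bipartiteness argument coincides with the paper's, and the connectedness argument follows the same overall plan: assume $H_C(G)$ is disconnected and produce a $2$-cut $\{u,v\}$, contradicting $3$-connectedness.

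There is, however, a genuine gap in the step you treat as immediate. You write that ``by the set-extension of the conflict definition, no pair of attachments of $\mathcal{B}_1$ interleaves with a pair of attachments of $\mathcal{B}_2$.'' The definition gives you only that no \emph{individual} bridge of $\mathcal{B}_1$ conflicts with any individual bridge of $\mathcal{B}_2$; it says nothing about the aggregate attachment sets, and in fact the claim is false for an arbitrary choice of component. On a $12$-cycle $C$ with vertices $1,\dots,12$ take chords $B_1=\{1,7\}$, $B'=\{3,5\}$, $B''=\{9,11\}$. These are pairwise non-conflicting, so $H_C$ consists of three isolated vertices; with $\mathcal{B}_1=\{B_1\}$ and $\mathcal{B}_2=\{B',B''\}$ the attachment sets $\{1,7\}$ and $\{3,5,9,11\}$ interleave via $1,3,7,9$, and no arc of $C$ separates them. (The graph is planar; it is of course not $3$-connected, but in a proof by contradiction each deduction must be valid, and this one is not.)

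The paper avoids this by never pooling the other components. It fixes a \emph{single} bridge $B'\notin\mathcal{B}$ and observes that each bridge of $\mathcal{B}$, not conflicting with $B'$, has all its attachments in one closed segment of $C$ between consecutive attachment points of $B'$. The key step---absent from your sketch---is then to invoke the connectedness of $\mathcal{B}$ \emph{within} $H_C(G)$: two conflicting bridges must lie in the same segment, so the whole component $\mathcal{B}$ lies in one segment. Taking $u,v$ to be the extreme attachment points of $\mathcal{B}$ on that segment, the paper further shows that every interior vertex of the arc $P_{u,v}$ is straddled by some bridge of $\mathcal{B}$ (again using connectedness of $\mathcal{B}$), so that any outside bridge touching the interior of $P_{u,v}$ would conflict with $\mathcal{B}$. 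From this, $\{u,v\}$ is forced to be a separating pair. Your outline reaches the same endpoint, but the route through ``set-level non-interleaving'' does not get there without essentially reproducing this argument for a well-chosen component.
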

\begin{proof}
It is easy to see that for an embedded planar graph $G$ and any
cycle $C$ contained in it, the conflict graph, $H_C(G)$ is bipartite (with
the bipartition being given by whether a bridge is mapped ``inside'' or 
``outside'' the cycle). 
We now show that $H_C(G)$ is connected. 

For every pair of points $x,y \in C$ there are two paths from x to y along C. We will call these paths $P_{x,y}$ and $P'_{x,y}$.
Let ${\mathcal{B}}$ be a set of bridges that correspond to one of the connected components of $H_C(G)$.

Pick a pair $u,v$ s.t
 \begin{itemize}
\item$u$ and $v$ are attachment points of bridges in $\mathcal{B}$.
\item$u$ and $v$ are not adjacent.
\item The attachment points of all bridges in  $\mathcal{B}$ lie in either $P_{u,v}$ or $P'_{u,v}$. (Assume WLOG that they lie in $P_{u,v}$).
 \end{itemize}
We now show that one can always find such a pair $u,v$.
Pick a bridge $B'  \notin {\mathcal{B}}$ . The attachment points of $B'$ divide $C$ into
a number of segments .(atleast $3$ since every bridge has  $3$ or more points of attachment.)
Since no bridge in ${\mathcal{B}}$ conflicts with $B'$ and the vertices corresponding to ${\mathcal{B}}$
in $H_C(G)$ form a connected component, it follows easily that all points of attachments of ${\mathcal{B}}$ must lie in one of the
segments. This implies that there is a point in $C$(a point of attachment of $B'$) that is not a point of attachment of any bridge in ${\mathcal{B}}$. 
Therefore we can find a pair $u,v$ with the properties listed above.

Now, for every point $x \notin \{u,v\}$ on $P_{u,v}$, $\exists$ a bridge $B \in \mathcal{B}$ with 
points of attachment $b_{1}, b_{2}$ s.t $b_{1}$ precedes $x$ and $b_{2}$ succeeds $x$ in $P_{u,v}$. 
If not then $\forall B \in \mathcal{B}$ either all points of attachment of $B$ lie between $u$ and $x$ 
or all of them lie between $x$ and $v$.
This implies that in $H_C(G)$ vertices corresponding to  bridges in $\mathcal{B}$  $u$ as a point of attachment are 
not connected to vertices corresponding to bridges with $v$ as a point of attachment.
This is a contradiction since ${\mathcal{B}}$ corresponds to a connected component in $H_C(G)$.

Now, if $\{u,v\}$ is not a separating pair then $\exists$ a bridge $B \notin \mathcal{B}$ with points of attachment in both
 $P_{u,v} \setminus \{u,v\}$ and $P_{u,v} \setminus \{u,v\}$.
From the above it follows that $B'$ conflicts with a bridge in  $\mathcal{B}$ which is not possible.
Therefore, $\{u,v\}$ is a separating pair.\qed
\end{proof}

\subsection{Inside and Outside a Fundamental Cycle}
Next we fix which bridges w.r.t. each of the fundamental cycles are mapped
inside and which are mapped outside the concerned cycle. Basically we
find one \emph{face} of the grap and call it the external face. Thus
every bridge containing this face is mapped outside any other cycle. This
completely fixes the $2$-coloring.
\begin{proposition}\label{prop:fundFace}
 Any biconnected embedded planar graph has a fundamental \emph{face} (i.e.
a fundamental cycle which is also a face)
w.r.t. each of its spanning trees.
\end{proposition}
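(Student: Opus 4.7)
My plan is to use planar duality. Fix a biconnected planar embedding of $G$ and a spanning tree $T$. Let $G^\ast$ denote the dual graph of the embedding (with one vertex per face and one edge per primal edge) and, for a primal edge $e$, let $e^\ast$ denote its dual edge. I want to consider the subgraph
\[
T^\ast = \{\, e^\ast : e \in E(G)\setminus E(T)\,\}
\]
of $G^\ast$. The first step is the classical fact that $T^\ast$ is a spanning tree of $G^\ast$: it is spanning because it has $|F|-1 = |E|-|V|+1$ edges (by Euler's formula) over $|F|$ vertices, and it is acyclic/connected because a cycle in $T^\ast$ would correspond to a cut in $G$ that separates vertices even after removing only non-tree edges (impossible since $T$ remains connected), and conversely $T$ being acyclic prevents $T^\ast$ from being disconnected.

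Next I pick any leaf $f$ of $T^\ast$ (a tree on at least two vertices has a leaf; if $G^\ast$ has only one vertex then $G$ is a tree, contradicting biconnectedness with at least one cycle). Being a leaf means $f$ is incident to exactly one edge of $T^\ast$, i.e., the boundary of the face $f$ in $G$ contains exactly one non-tree edge $e$; all other boundary edges lie in $T$. Here I use the biconnectedness of $G$ via Proposition~4.2.something standard: in a biconnected planar graph every face boundary is a simple cycle. Hence the boundary of $f$, with $e$ removed, is a simple path $P$ in $T$ joining the two endpoints of $e$.

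The final step is to observe that $P$ must be the unique $T$-path between the endpoints of $e$, so $C(e) = P \cup \{e\}$ equals the boundary of $f$. Thus $f$ is simultaneously a face and the fundamental cycle $C(e)$, and the proposition is proved.

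The main obstacle I anticipate is the bookkeeping in the duality step, namely confirming that $T^\ast$ is actually a tree; the rest is essentially extracting the definition once a leaf is chosen. Biconnectedness is used precisely to guarantee simple face boundaries, which is what lets us conclude that the tree-edges bordering a leaf face form a simple path rather than a more complex walk.
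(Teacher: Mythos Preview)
Your argument is correct and follows essentially the same route as the paper: form the dual tree $T^\ast$ from the duals of the non-tree edges, pick a leaf, and observe that the corresponding face is bounded by a single non-tree edge together with a tree path, hence is a fundamental cycle. Your write-up is in fact more careful than the paper's, which simply cites the dual-tree fact as ``well known'' and declares the leaf conclusion ``easy to see''; in particular you make explicit where biconnectedness enters (to guarantee that face boundaries are simple cycles).
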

\begin{proof}
It is well known that the set of dual edges of the non-tree edges
of a biconnected planar graph forms a tree (after reducing all multi-edges
to a single edge) - the so called dual tree. This follows from observing that
cycles in the primal correspond to cuts in the dual (Proposition~4.6.1 \cite{Diestel05}) and therefore the set of edges dual to the non-tree edges are connected
in the dual graph, the proof being completed by applying Euler's relation to
verify that the number of such edges is exactly one less than the number of
faces.  It is easy to see that the
face of the original graph corresponding to a leaf of the dual tree is
a fundamental face. \qed
\end{proof}
A direct consequence of Fact~\ref{fact:fundCycle},
Proposition~\ref{prop:fundFace}, 
Fact~\ref{fact:faceIndNonsep} is the following (since finding whether
a cycle is induced and non-separating is a logspace predicate given
the list of edges in the cycle on the input tape).
\begin{corollary}\label{cor:fundFace} Every $3$-connected planar graph has at least one fundamental
face and this can be found by a Logspace transducer.
\end{corollary}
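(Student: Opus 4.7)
The plan is to combine the three cited facts in the obvious algorithmic way. First I would invoke Reingold's result to compute (in logspace) a fixed spanning tree $T$ of $G$; this has already been noted as a routine logspace computation in the preliminaries. Then I would iterate over the non-tree edges $e \in E(G) \setminus E(T)$ one at a time, and for each such $e$ produce the edge-list of the fundamental cycle $C(e)$ on the output tape using the logspace transducer guaranteed by Fact~\ref{fact:fundCycle}.

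Next, for each fundamental cycle $C(e)$, I would test whether it is an induced, non-separating cycle of $G$. The ``induced'' check just asks whether any vertex of $C(e)$ is adjacent (in $G$) to a non-consecutive vertex of $C(e)$, which is obviously in logspace. The ``non-separating'' check asks whether $G \setminus V(C(e))$ is connected, and by Reingold's theorem undirected connectivity is in logspace; since a logspace machine can hold a single fundamental cycle on an auxiliary tape (using Fact~\ref{fact:fundCycle} to regenerate its vertex set on demand), this predicate is logspace as well. I would output the first $C(e)$ for which both tests succeed.

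Correctness: since $G$ is $3$-connected planar, Fact~\ref{fact:faceIndNonsep} says every face of the (unique) planar embedding of $G$ is an induced non-separating cycle, and conversely every induced non-separating cycle of $G$ is a face. Meanwhile Proposition~\ref{prop:fundFace} (applied to the biconnected graph $G$ with spanning tree $T$) guarantees that at least one face of $G$ is a fundamental cycle with respect to $T$. Hence at least one $C(e)$ is an induced non-separating cycle and the search will succeed; and whatever $C(e)$ the search returns is automatically a face and therefore a fundamental face.

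I do not anticipate a real obstacle here, since the corollary really is the intended direct consequence: existence is handed to us by Proposition~\ref{prop:fundFace}, the characterization of faces as induced non-separating cycles is handed to us by Fact~\ref{fact:faceIndNonsep}, and each ingredient of the logspace predicate — enumerating fundamental cycles, checking inducedness, and checking non-separation via \textsf{UReach} — is explicitly logspace. The only minor care needed is to respect the logspace composition convention and regenerate $C(e)$ (or its characteristic function on vertices) from $(G,T,e)$ whenever it is needed, rather than trying to store it.
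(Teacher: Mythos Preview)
Your proposal is correct and follows exactly the approach the paper intends: enumerate fundamental cycles via Fact~\ref{fact:fundCycle}, test each for being induced and non-separating (a logspace predicate), and appeal to Proposition~\ref{prop:fundFace} for existence and Fact~\ref{fact:faceIndNonsep} for the characterization of faces. You have simply spelled out in more detail what the paper summarizes in one sentence.
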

Knowing which edges of the graph map to the region enclosed by each of the
fundamental cycles, we proceed
to construct a purported embedding of the given graph. If the graph is indeed
planar we will obtain a valid planar embedding, else we will not be able to do
so, giving an effective planarity test (which produces an embedding as a 
side-effect).
\subsection{Obtaining a Planar Embedding}\label{section:embed}
Here we exhibit a way to identify the edges in each face of the given
$3$-connected graph. Fact~\ref{fact:faceIndNonsep} then gives a way of 
effectively checking if the graph is plnar.

At the heart of the proof is a change of basis in the cycle space over 
$\mathbb{Z}_2$ Somewhat msurprisingly 
the solution to the equations obtained in the previous section tell us
which faces sum up (over $\mathbb{Z}_2$) to give a particular fundamental cycle.
Notice that we have only an ``implicit'' representation of the faces (see
below) and while we are seeking for an explicit representation in terms 
of which edges constitute a face. Since fundamental cycles and internal 
faces both form a basis of the cycle space over $\mathbb{Z}_2$, we just need 
to invert the matrix that expresses the fundamental cycles as a linear 
combination of faces.  Though seemingly this would place the problem in 
\pt\Log\ we give an explicit way to perform this inversion
which yields a \Log\ upper bound.

Returning to the ``implicit'' representation of faces, there is a natural
bijection between non-tree edges and faces viz. one that maps a non-tree edge
$e$ to that face $f(e)$ adjacent to $e$ which lies \emph{inside} the fundamental
cycle $C(e)$ (w.r.t. the external face $C(e_0)$). With faces labeled in this way,
the solution of the preceding section tells us which faces are contained 
within the fundamental cycle $C(e)$. 

Let us start by fixing some notation. For distinct non-tree edges $e_1,e_2$,
define $e_1 \prec e_2$ iff 
in a $2$-coloring of the conflict graph $H_{C(e_2)}(G)$, the colors of
the vertices corresponding to bridges containing $e_0,e_1$ get different
colors. Intuitively, this means that (if $G$ is planar) $C(e_2)$ separates 
$e_1$ from $e_0$ in the unique planar embedding of $G$.

For each $e \in E(G) \setminus \{e_0\}$, let $P(e)$ denote:
\[
\{ e' \in E(G) \setminus (E(T) \cup \{e\}) | e' \prec e \wedge \nexists e'': e'\prec e''\prec e \},
\]
Further, let $F(e)$ denote 
\[
\bigoplus_{e' \in P(e)\cup\{e\}}{ C(e')} 
\]
Notice that for sets $A,B$ the set $A \oplus B$ denotes the symmetric 
difference of the two sets and the notation above refers to an iteration of 
this operation.

If the graph is $3$-connected planar then in its unique planar embedding
with $C(e_0)$ as the external face, $P(e)$ consists  of
the set of non-tree edges 
that are enclosed by $e$ but not by any $e''$ which is also enclosed by $e$.
Intuitively, it clear that these are exactly the non-tree edges
occurring on the face $f(e)$ (see description in the preceding paragraph).
We will make this intuition precise and in fact, show the following:
\begin{lemma} For each non-tree edge $e \neq e_0$, the face $f(e)$ consists
exactly of the edges in the set $F(e)$.
\end{lemma}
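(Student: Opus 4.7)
The plan is to reduce the identity $F(e) = f(e)$ to a combinatorial count on a rooted tree, using planar duality. First, recall from the proof of Proposition~\ref{prop:fundFace} that the dual edges of the non-tree edges form a spanning tree $T^*$ of the dual graph $G^*$. I root $T^*$ at $f_0 := C(e_0)$, the designated external face. The natural bijection $e \leftrightarrow f(e)$ then identifies each non-tree edge $e \neq e_0$ with a non-root vertex of $T^*$: the dual edge $e^*$ joins the two faces of $G$ incident to $e$, and by definition $f(e)$ is the one lying inside $C(e)$, which is precisely the endpoint of $e^*$ farther from $f_0$ in $T^*$.

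The key structural claim is that the poset $\prec$ coincides with the proper-ancestor relation of $T^*$: $e_1 \prec e_2$ iff $f(e_1)$ is a proper descendant of $f(e_2)$. Deleting $e_2^*$ from $T^*$ separates it into two subtrees; since cycles in the primal correspond to cuts in the dual, the faces in the component not containing $f_0$ are exactly the faces lying inside the region bounded by $C(e_2)$. By Lemma~\ref{lemma:conflictGraph}, the conflict graph $H_{C(e_2)}(G)$ is connected and bipartite, so its $2$-coloring is unique once we fix that the bridge containing $e_0$ lies outside. Hence $e_1 \prec e_2$ iff the bridge containing $e_1$ lies inside $C(e_2)$ iff $f(e_1)$ is a descendant of $f(e_2)$ in $T^*$. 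Consequently $P(e)$, the set of immediate $\prec$-predecessors of $e$, corresponds exactly to the set of children of $f(e)$ in $T^*$.

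Next, I invoke the standard planar identity expressing a fundamental cycle as the $\mathbb{Z}_2$-sum of the faces it encloses: $C(e) = \bigoplus_{e' \preceq e} f(e')$, where $\preceq$ denotes $\prec$ or equality. Substituting this into the definition of $F(e)$ yields
\[
F(e) \;=\; \bigoplus_{e' \in P(e) \cup \{e\}} \; \bigoplus_{e'' \preceq e'} f(e'').
\]
I count, for each non-tree edge $e''$, the parity of appearances of $f(e'')$ on the right. If $e'' = e$, only the term $e' = e$ contributes, giving one occurrence. If $e'' \prec e$, then $f(e'')$ appears once in $C(e)$ and once more in $C(e')$ for every $e' \in P(e)$ with $e'' \preceq e'$; since $P(e)$ consists of the children of $f(e)$ in $T^*$, and the unique path in $T^*$ from $f(e'')$ up to $f(e)$ meets exactly one child of $f(e)$, there is exactly one such $e'$, giving two total occurrences. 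If $e'' \not\preceq e$, transitivity forbids any contribution. Therefore $F(e) = f(e)$ as edge sets.

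The main obstacle is the structural claim pairing the combinatorially defined $\prec$ with the geometric ancestor relation in $T^*$; this is where the uniqueness of the bipartition (via connectedness of the conflict graph, Lemma~\ref{lemma:conflictGraph}) is indispensable, since without it $\prec$ would not even be well-defined. Once that correspondence is in place, the cycle-space identity $C(e) = \bigoplus_{e' \preceq e} f(e')$ is a classical fact about plane graphs, and the remainder is the elementary parity argument above.
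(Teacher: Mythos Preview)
Your proof is correct and reaches the same conclusion as the paper, but the organization is genuinely different. The paper argues indirectly: it introduces counters $\phi(C)$ (faces enclosed) and $\psi(C)$ (fundamental cycles enclosed), invokes Euler's formula to show $\phi=\psi$, establishes the recursion $\psi(C(e)) = 1 + \sum_{e'\in P(e)}\psi(C(e'))$, and then uses this count together with linear independence of the internal faces over $\mathbb{Z}_2$ to conclude that exactly one face survives the symmetric difference, which must be $f(e)$. You instead set up the dual spanning tree $T^*$ rooted at $f_0$, identify $\prec$ with the proper-ancestor relation there (so $P(e)$ is literally the set of children of $f(e)$), and then perform a direct parity count after substituting $C(e')=\bigoplus_{e''\preceq e'} f(e'')$. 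The shared combinatorial core is the same uniqueness fact---each $e''\prec e$ lies below exactly one $e'\in P(e)$---but your dual-tree framing makes this immediate from tree structure rather than from an ad~hoc induction on enclosure depth, and your parity computation bypasses the need for the $\phi=\psi$ detour and the final appeal to linear independence. The paper's version, on the other hand, avoids explicitly constructing $T^*$ and verifying that $\prec$ matches tree ancestry, so it is slightly more self-contained at the level of the primal graph.
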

\begin{proof} Given a fundamental cycle $C$ let $\phi(C)$ represent the number of faces 
that lie inside $C$ and $\psi(C)$ represent the number of fundamental cycles 
that lie inside $C$ including $C$. Notice that for fundamental cycles
which are also faces, $\phi(C) = \psi(C) = 1$. In fact for every fundamental
cycle $C$, as an easy consequence of Euler's formula it follows that
$\phi(C) = \psi(C)$.
 
Let $e \in E(G) \setminus E(T)$. We will first show that $F(e)$ is a face.
Since each $e' \in P(e)$ lies inside $C(e)$ and for every non-tree 
$e'' \not\in  P(e)$, such that $e'' \prec e$, there is exactly one $e' \in P(e)$
such that $e'' \prec e'$. There is one such $e'$ because we know that
every non-tree edge enclosed by $C(e)$ is either enclosed by some other
fundamental cycle $C(e_1)$ or otherwise is in $P(e)$. Thus by induction
on the ``enclosure depth'' of an edge we get an $e' \in P(e)$ which encloses
it. The uniqueness follows from observing that if both $e \prec e' \wedge e \prec e'_1$ then either $e'_1 \prec e'$ or $e' \prec e'_1$.
 - this is due to planarity - therefore both $e',e'_1$ cannot be in $P(e)$.
Thus we get,
\[
\psi(C(e)) = 1 + \sum_{e' \in P(e)}{\psi(C(e'))}
\]
therefore:
\[
\phi(C(e)) = 1 + \sum_{e' \in P(e)}{\phi(C(e'))}
\]
Now since every fundamental cycle can be written as a sum of internal faces
of $G$ and $F(e)$ is a sum of a set of fundamental cycles (where all 
computation is over $\mathbb{Z}_2$), it is also a sum of some internal faces.
Since only faces contained within $C(e)$ figure in this sum, $F(e)$ must
be a disjoint sum of some cycles enclosed by $C(e)$. We can be more
specific, the sum:
\[
\bigoplus_{e' \in P(e)}{C(e')}
\]
includes exactly the faces inside $C(e)$ which are also faces inside some 
$C(e')$ for $e' \in P(e)$. Thus $F(e)$ includes exactly
the faces contained inside $C(e)$ but not in any $C(e')$ for 
$e' \in P(e)$. Now, it is easy to see from the expression for
$\phi(C(e))$ that there is exactly one such face, so it follows from the
linear independence of the faces that this must be $f(e)$. \qed
\end{proof}

Thus, we can complete the proof of the following:
\begin{theorem}\label{thm:main}
Given a graph $G$, constructing a planar embedding for $G$ if it is planar
and otherwise rejecting it, can be done in logspace.
\end{theorem}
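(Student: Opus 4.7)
The plan is to stitch together the logspace subroutines developed in the preceding subsections. By Lemma~\ref{lemma:dntw} I first extract the triconnected components of $G$ in logspace, and by Lemma~\ref{lemma:biconnToTriconn} it suffices to produce a combinatorial planar embedding of each triconnected component separately. So I may assume the input is $3$-connected; call it $G$ again.

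Fix a spanning tree $T$ of $G$ (computable in logspace via Reingold's theorem applied to $G$) and, using Corollary~\ref{cor:fundFace}, find a non-tree edge $e_0$ whose fundamental cycle $C(e_0)$ is a face of the (unique) planar embedding of $G$; designate $C(e_0)$ as the external face. For each other non-tree edge $e$, Lemma~\ref{lemma:conflictGraph} guarantees that $H_{C(e)}(G)$ is bipartite and connected, so the $2$-coloring is uniquely determined once the bridge containing $e_0$ is fixed on the ``outside'' side. Bridges and conflicts are logspace objects (a bridge is a connected component of $G \setminus C$ with its attachments, or a chord, and conflict is a cyclic-order check on $C$), and bipartite reachability gives the color of any bridge via a parity-of-path-length computation in $H_{C(e)}(G)$. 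Consequently the predicate $e_1 \prec e_2$ — defined by whether the bridge of $e_1$ and the bridge of $e_0$ receive distinct colors in $H_{C(e_2)}(G)$ — is in logspace, and so therefore is membership in $P(e)$, which is just one $\prec$-query combined with a universal quantifier over potential intermediate edges $e''$.

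The face-identification Lemma just proved gives $f(e) = F(e)$ for each non-tree edge $e \neq e_0$. Rather than materialise $F(e)$, I test membership on demand: an edge $e^\ast$ lies on $f(e)$ if and only if
\[
\bigl|\{\, e' \in P(e) \cup \{e\} : e^\ast \in C(e') \,\}\bigr|
\]
is odd. Membership in $C(e')$ is a logspace predicate by Fact~\ref{fact:fundCycle} and membership in $P(e)$ is logspace by the previous paragraph, so this parity is computable in logspace. Thus the edge set of every purported face is produced by a logspace transducer, and $C(e_0)$ itself is appended as the last face.

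To verify and assemble, I use Fact~\ref{fact:faceIndNonsep}: check in logspace that every listed cycle is induced and non-separating and that every edge of $G$ lies on exactly two of the listed cycles; reject otherwise. If the check passes, Fact~\ref{fact:faceIndNonsep} forces $G$ to be planar and the listed cycles to be precisely its faces, so for each face I can read off the cyclic vertex order from the edge set (a cycle determines its cyclic order up to reversal) and feed this into Proposition~\ref{prop:dualToPrimalEmbed} to emit a combinatorial embedding, which Lemma~\ref{lemma:biconnToTriconn} patches into an embedding of the original input. The main obstacle, I expect, is justifying that $\prec$, $P(e)$, and the parity-sum defining $F(e)$ all remain in logspace despite their apparent recursive character: the point is that every semantic question reduces to reachability parity in a bipartite polynomial-size auxiliary graph, which is a logspace primitive, so nothing needs to be tabulated across levels; the algorithm is otherwise a straightforward composition of the logspace gadgets already established.
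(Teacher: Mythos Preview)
Your proposal is correct and follows essentially the same route as the paper: reduce to the $3$-connected case via Lemmas~\ref{lemma:dntw} and~\ref{lemma:biconnToTriconn}, locate a fundamental face as external face, compute $\prec$, $P(e)$, and $F(e)$ for each non-tree edge, verify the resulting face list against Fact~\ref{fact:faceIndNonsep}, and assemble the embedding via Proposition~\ref{prop:dualToPrimalEmbed}. Your write-up is in fact more explicit than the paper's own short proof about why each primitive ($\prec$, membership in $P(e)$, the parity defining $F(e)$) is a logspace predicate; the paper leaves that implicit in Algorithm~\ref{algo:planarityAlgo} and just asserts that a non-planar component will be caught by some $F(e)$ failing to be induced non-separating or by some edge appearing on three $F(e)$'s.
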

\begin{proof}
Given a graph we obtain its $3$-connected components using Lemma~\ref{lemma:dntw}.
If each triconnected compoenent is planar, we will 
successfully obtain a planar embedding for each component and then
obtain the proof with the aid of Lemma~\ref{lemma:biconnToTriconn}.
If some triconnected component is not planar, we will either obtain an
$F(e)$ which is not induced or is separating or we will obtain an edge
lying on at least three $F(e)$'s and therefore reject.
\end{proof}

\begin{algorithm}[h!]
\Input{Graph $G$}
\Promise{$G$ is $3$-connected}
\Output{A planar embedding of $G$ if planar and $\emptyset$ otherwise}
\BlankLine
Compute a spanning tree $T$ in $G$\;
\lIf{there is no fundamental face w.r.t. $T$}{\Return{$\emptyset$}}\;
Let $C_0 = C(e_0)$ be a fundamental face\;
\ForEach{fundamental cycle $C(e) \neq C_0$}{
	Construct conflict graph $H_{C(e)}(G)$\;
	\lIf{$H_{C(e)}(G)$ is not bipartite}{\Return{$\emptyset$}}\;
	$2$-color $H_{C(e)}(G)$\;
	\ForEach{non-tree edge $e' \neq e$} {
		\tcp{Let $B_{e',e}$ be the bridge of $C(e)$ containing $e'$}
		\lIf{$B_{e',e}$ gets color different from $B_{e_0,e}$ in $H_{C(e)}(G)$}{let $e' \prec e$}\;
	}
}
\ForEach{fundamental cycle $C(e) \neq C_0$}{
	Let
	$ P(e) = \{ e' \in E(G) \setminus (E(T) \cup \{e\}) | e' \prec e \wedge \nexists e'': e'\prec e''\prec e \}
	$\;
	Let 
	$ F(e) = \bigoplus_{e' \in P(e)\cup\{e\}}{ C(e')} $\;
}
Let $F(e_0) = C_0$\;
\ForEach{edge $e \in E(G)$}{
	\lIf{$e$ doesn't lie in exactly two $F(e')$'s}{\Return{$\emptyset$}}\;
}
\tcp{Follow the proof of Proposition~\ref{prop:dualToPrimalEmbed} to construct an embedding of $G$}
\Return{planar embedding of $G$}\;
\caption{Planarity Testing and Embedding Algorithm for $3$-connected graphs}
\label{algo:planarityAlgo}
\end{algorithm}

\begin{algorithm}[h!]

\Input{Graph $G$}
\Promise{$G$ is non-planar}
\Output{A Kuratowski Minor}
\BlankLine
Find a non-planar subgraph $G'$ and a cycle $C \subseteq G'$ s.t. $H_C(G')$ is non-planar \;
\Begin{
	Fix an ordering of edges of $G$ \;
	Find smallest edge $e_i = (x,y)$ such that graph $G'$ on $\{e_1,\ldots,e_i\}$ is non-planar\;
	Find planar embedding of $G' \setminus e_i$\;
	Let $F_x,F_y$ be two faces containing $x,y$ respectively\;
	Find a simple path between $F_x,F_y$ in $dual(G')$ which avoids all other faces containing $x,y$\;
	Let $C$ be the symmetric difference of the faces along this path\;
	\tcp{$H_C(G')$ is non-bipartite}
}
Find an induced odd cycle $\bar{C}$ in $H_C(G')$\;
\Begin{
	Find a spanning tree of $H_C(G')$ and $2$-color it\;
	Pick a non-tree edge $e$ with both endpoints of same color, so that 
	all the chords of fundamental cycle $C(e)$ have different colors\;
	Walk on the tree between endpoints of $e$ and ``short-circuit''
	as many tree-paths, by chords, as possible\;
}
Find a cycle with $3$ mutually conflicting bridges\;
Contract the bridges to single vertices (if they originally contained a vertex)
or edges\;
Find a Kuratowski minor of this graph by brute force\;
Expand back to a Kuratowski minor of the original graph\;
\caption{Algorithm for finding Kuratowski Minors}
\label{algo:kuratowskiAlgo}
\end{algorithm}

\section{Finding Kuratowski Subgraphs} \label{section:kuratowski}
We describe an an algorithm to obtain a Kuratowski subgraph given a cycle with 
a non-bipartite conflict graph.  
As a consequence, to obtain a Kuratowski subgraph, it is sufficient to find 
such a cycle in a subgraph $G'$ of $G$. 
We do this in a series of lemmas:

\begin{lemma}
Given a non-planar graph $G$, we can, in logspace, 
find a non-planar subgraph $G'$ and a cycle $C \subseteq G'$ s.t. $H_C(G')$ is 
non-bipartite.
\end{lemma}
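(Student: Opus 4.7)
The plan is to isolate an edge-minimal non-planar subgraph by iterated planarity testing, embed its all-but-one-edge deletion, and read the witness cycle off the dual graph of that embedding.

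First, I would fix any ordering $e_1,\ldots,e_m$ of the edges of $G$ and, using only a logspace counter to iterate over $j$, invoke the planarity test of Theorem~\ref{thm:main} on each prefix $G[\{e_1,\ldots,e_j\}]$ until the smallest non-planar prefix is found. Let this prefix be $G' = G[\{e_1,\ldots,e_i\}]$ and $e_i = (x,y)$. By minimality of $i$, the graph $G'' := G' \setminus e_i$ is planar, and Theorem~\ref{thm:main} returns a combinatorial embedding of $G''$ in logspace. Non-planarity of $G'$ forces $x$ and $y$ to lie on no common face of this embedding of $G''$.

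Next, I would construct the dual graph $D$ of $G''$ (a logspace operation given the combinatorial embedding) and use Reingold's logspace reachability to find a simple dual path $\pi: F_0, F_1, \ldots, F_k$ with $F_0$ incident to $x$, $F_k$ incident to $y$, and each intermediate $F_j$ (for $0 < j < k$) incident to neither endpoint of $e_i$. Such a path exists because $D$ is connected and there is no face incident to both $x$ and $y$. The cycle $C$ is then defined by
\[
    C \;=\; F_0 \oplus F_1 \oplus \cdots \oplus F_k,
\]
where each face is identified with its set of bounding edges. A direct parity count shows that $x$ and $y$ have degree two in $C$ while every other vertex has even degree; choosing $\pi$ to be, for instance, a shortest such dual path, a standard shortcutting argument forces $C$ to be a single simple cycle through $x$ and $y$.

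The main obstacle is verifying that $H_C(G')$ is non-bipartite. The region $R = F_0 \cup \cdots \cup F_k$ bounded by $C$ is a union of full faces of the embedded $G''$, while the chord $e_i$ of $G'$ is to be drawn across $R$. If $H_C(G')$ were bipartite, a consistent $2$-coloring would allow $e_i$ to be placed on one side of $C$ in harmony with the existing planar embedding of $G''$, yielding a planar embedding of $G'$ and contradicting its non-planarity. Turning this topological argument into an explicit odd closed walk in $H_C(G')$ is where I expect the technical core to lie: one must track, along the two arcs of $C$ between $x$ and $y$, the alternating conflicts among bridges induced by the interior structure of $R$ and show that they close up through the vertex $e_i$ into a cycle of odd rather than even length in the conflict graph. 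The careful choice of $\pi$, so that this final parity comes out odd, is the delicate part of the proof.
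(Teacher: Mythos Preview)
Your construction is essentially identical to the paper's: order the edges, take the least non-planar prefix $G'$, embed $G'' = G'\setminus e_i$ via Theorem~\ref{thm:main}, and let $C$ be the symmetric difference of the faces along a dual path from an $x$-face to a $y$-face that avoids all other faces incident to $x$ or $y$. So the overall approach is correct and matches the paper.

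Where you go astray is in the last paragraph. The contradiction argument you already wrote down --- if $H_C(G')$ were bipartite then the bridges (each of which is planar together with $C$, since $G''$ is planar) could be laid out consistently on the two sides of $C$, making $G'$ planar --- \emph{is} the entire proof of non-bipartiteness in the paper. There is no further ``technical core'': you do not need to exhibit an explicit odd closed walk in $H_C(G')$, and you certainly do not need to choose $\pi$ carefully to force some parity. Non-bipartiteness here is proved by contradiction, not by construction, and it holds for \emph{every} cycle $C$ obtained from any dual path meeting the stated avoidance condition, not only for a specially chosen one. The task of actually locating an induced odd cycle in $H_C(G')$ is deferred to a separate lemma (the next one in the paper), and it uses a completely different, purely combinatorial spanning-tree argument inside the conflict graph rather than anything about $\pi$ or the arcs of $C$.

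In short: drop the final paragraph, keep the contradiction argument, and you have exactly the paper's proof.
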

\begin{proof}
Given access to a routine for planarity checking, and a non-planar graph $G$,
we can find a minimal non-planar subgraph $G'$ of $g$ in the following sense.
Order the edges of $G$ arbitrarily as $e_1,\ldots,e_m$. Now consider the 
smallest $i$ such that the graph $G'$ formed by the union of the first $i$ 
edges $e_1,\ldots,e_i$ is non-planar. Notice that this means that 
$G' \setminus e_i$ is necessarily planar.

Now, we show how one can find a cycle in this graph $G'$ such that the conflict
graph $H_C(G')$ is non-bipartite. Construct a planar embedding of 
$G' \setminus \{e\}$. The endpoints, say $x,y$, 
of $e_i$ must lie on different faces of this embedding because $G'$ is non-planar.

Find a path in the dual graph of the embedding between any two faces $F_x,F_y$ incident respectively on $x,y$.
(this path must avoid any other faces incident on $x$ or $y$.)
The symmetric difference of the faces in this path is a cycle $C$ one of whose bridges is $e_i$.

We claim that $H_C(G')$ is non-bipartite. If it were bipartite then it would be possible to
give an orientation to each of it's bridges such that conflicting bridges got opposing orientations.
But since the bridges of $C$ are all planar, this would imply that $G'$ is planar .
Therefore, $H_C(G')$ is non-bipartite.  
\end{proof}


\begin{lemma}
Given a non-planar graph $G'$ and a cycle $C$ witnessing this via the
non-bipartiteness of $H_C(G')$, we can, in logspace, 
find an induced odd cycle $\bar{C}$ in $H_C(G')$.
\end{lemma}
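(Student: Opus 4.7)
The plan is to work inside the conflict graph $H_C(G')$, select a carefully chosen odd fundamental cycle of a spanning tree, and then greedily short-circuit it to obtain an induced odd cycle. I would first construct $H_C(G')$ in logspace: its vertices are the bridges of $C$ in $G'$, enumerable via connected components of $G'\setminus C$ (Reingold), and the conflict relation is a logspace predicate on pairs of bridges. Then compute a spanning tree $T$ of $H_C(G')$ and $2$-color it by parity of tree-distance from a fixed root. Since $H_C(G')$ is non-bipartite, some non-tree edge $e=(x,y)$ has same-colored endpoints.

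The key choice is to pick such an $e$ that minimizes the tree distance $d_T(x,y)$. Its fundamental cycle $C_T(e)$ then has odd length $d_T(x,y)+1$, and every chord of $C_T(e)$ must span an \emph{odd} number of tree edges: a chord $(u,v)$ with same-colored endpoints would itself be a same-color non-tree edge whose tree distance is strictly less than $d_T(x,y)$, contradicting the minimal choice of $e$. Hence all chord endpoints in $C_T(e)$ have different colors.

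Next, enumerate the tree path as $x=u_0,u_1,\ldots,u_n=y$ and run the greedy walk: starting at $u_{j_0}=u_0$, set $j_{k+1}$ to be the largest $j>j_k$ with $(u_{j_k},u_j)\in E(H_C(G'))\setminus\{e\}$ (the tree edge to $u_{j_k+1}$ always qualifies), stop when $j_k=n$, and close with $e$ to obtain $\bar C = u_{j_0}u_{j_1}\cdots u_{j_m}u_{j_0}$. This cycle is induced because any chord $(u_{j_k},u_{j_l})$ with $l>k+1$ would have been preferred over $u_{j_{k+1}}$ at step $k$ by greedy maximality (the only pair that equals $e$ is $(k,l)=(0,m)$, and that is the closing edge, not a chord). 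It is odd because each greedy step $(u_{j_k},u_{j_{k+1}})$ spans an odd tree-distance (tree edges span $1$; chords span odd distance by the previous paragraph), so $d_T(x,y)=\sum_k(j_{k+1}-j_k)$ is a sum of $m$ odd numbers equal to the even number $d_T(x,y)$, forcing $m$ even and $|\bar C|=m+1$ odd.

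All ingredients sit in logspace: Reingold supplies spanning trees, $2$-colorings, tree distances, and the component labeling of bridges; the minimizing choice of $e$ is a logspace scan over non-tree edges; the greedy walk needs only the current index $j_k$. The main obstacle is the parity-preservation of the short-circuit step, which is exactly what the minimality of $e$ delivers via the chord-coloring property — without that property a greedy jump could span an even tree-distance and collapse the odd cycle to an even one.
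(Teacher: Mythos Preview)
Your proof is correct and follows essentially the same approach as the paper: build a spanning tree of $H_C(G')$, $2$-color it, select a monochromatic non-tree edge $e$ whose fundamental cycle has only bichromatic chords, and greedily short-circuit that cycle to an induced odd cycle. The one noteworthy difference is your selection rule for $e$: the paper argues by ``a simple induction on the number of chords'' that such an $e$ exists, whereas you pick the monochromatic non-tree edge of \emph{minimum} tree distance and observe that any monochromatic chord of its fundamental cycle would contradict minimality. Your criterion is more explicit and transparently logspace-checkable; your ``jump to the farthest neighbor'' greedy is likewise a crisper specification of the paper's short-circuiting walk, and your parity bookkeeping ($\sum (j_{k+1}-j_k)=d_T(x,y)$ even, each summand odd) is cleaner than the paper's informal appeal to endpoint colors.
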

\begin{proof}
Since the conflict graph $H_C(G')$ is non-bipartite, it contains an odd cycle.
We aim to find an induced odd cycle in this graph. For this consider a spanning
tree of $H_C(G')$. $2$-color the spanning tree. Notice that there must exist
a non-tree edge between two vertices with the same color (else the graph
would have been bipartite). Find a non-tree edge $e$ such that all the chords
of the fundamental cycle $C(e)$ get opposite colors - a simple induction on the
number of chords that a fundamental cycle has, shows that such an edge must
exist and locating it in logspace is easy. 

We will construct a chordless cycle from $C(e)$ by replacing some tree paths by chords of $C(e)$. To do this,
let $e = (u_1,u_k)$ and let the vertices of the tree path from $u_1$ to $u_k$
be $u_2,\ldots,u_{k-1}$ in order. Call the point $u_i$ of a chord $(u_i,u_j)$
for $i < j$ as the origin of the chord and $u_j$ the end of the chord.
Now start walking from $u_1$ to $u_k$ along the tree path, outputting tree
edges till the origin of a chord is encountered. Output the chord and move 
on to the end of the chord and repeat. It is easy to see that the edges output by this
procedure along with the non-tree edge $(u_1,u_k)$ form an induced cycle
because either the origin or endpoint of any other chord of $C(e)$
 is not on this cycle. Also, because the endpoints of all edges on this cycle are oppositely
colored except for $(u_1,u_k)$, this is an odd cycle.
\end{proof}

At this point we have a cycle $C$ in $G$ with a non-bipartite conflict graph
$H_C(G')$ and an induced odd cycle $\bar{C}$ in $H_C(G')$ witnessing this. 
Now we prove the following,
\begin{lemma} 
Given an induced odd cycle $\bar{C}$ in $H_C(G')$, we can, in logspace, find a Kuratowski subgraph.
\end{lemma}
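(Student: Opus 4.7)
The plan is in three phases: (i) reduce the induced odd cycle $\bar{C}$ in $H_C(G')$ to a triangle, i.e.\ produce a new cycle $C^{*}$ in $G'$ carrying three pairwise conflicting bridges; (ii) contract each of these three bridges to a constant-size gadget so that the resulting graph has bounded size; (iii) brute-force a Kuratowski subdivision in this bounded-size graph and lift it back to $G'\subseteq G$.

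For phase (i), write $\bar{C} = B_1 B_2 \cdots B_{2k+1}$. If $k=1$ we are done. Otherwise I would shorten $\bar{C}$ by a local rerouting: choose two attachment points $a,a'$ of $B_2$ on $C$ and an internal $a$-$a'$ path $\pi$ in $B_2$, then replace one of the two arcs of $C$ between $a$ and $a'$ by $\pi$ to form a new cycle $C'$. A case analysis shows that the leftover pieces of $B_2$ together with (parts of) $B_1$ and $B_3$ become a single new bridge of $C'$ that still conflicts appropriately with $B_4,\dots,B_{2k+1}$, producing an induced odd conflict cycle of strictly smaller length in $H_{C'}(G')$. Iterating collapses $\bar{C}$ down to length three, yielding a cycle $C^{*}$ with three mutually conflicting bridges $B,B',B''$.

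For phase (ii), each of $B,B',B''$ uses only a constant number of attachment points on $C^{*}$ to witness the three pairwise conflicts (at most three per bridge). Inside each bridge, a constant number of internal paths joining these attachment points, selected by logspace reachability inside the bridge, suffice to preserve the conflict structure. Deleting everything else and contracting the retained subgraphs yields a graph $H$ of bounded size supported on (a contraction of) $C^{*}$ together with these three gadgets. Three pairwise conflicting bridges on a cycle are a classical obstruction to planarity, so $H$ is non-planar and must contain a subdivision of $K_{5}$ or $K_{3,3}$; brute-force enumeration of all bounded-size subgraphs of $H$ locates it in $O(1)$ time, and thus in logspace.

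For phase (iii), each branch vertex of the Kuratowski subdivision in $H$ is either a vertex of $C^{*}$, which lifts trivially, or lies in a contracted bridge, in which case it lifts to any single representative vertex of that bridge; each branch path of the subdivision is either an arc of $C^{*}$ or one of the retained internal paths, each of which is an actual path in $G'$. Outputting these lifted vertices and paths gives the desired Kuratowski subgraph of $G$.

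The main obstacle is making phase (i) genuinely logspace. A naive iteration would maintain a sequence of modified cycles of length depending on $k$, which is not succinctly representable in logspace. The approach I would pursue is to show that the sequence of rerouting steps can be described directly from $\bar{C}$: namely, that the final three pairwise conflicting bridges and the final cycle $C^{*}$ can be characterised as a logspace function of the attachment structure of $B_{1},\dots,B_{2k+1}$ on $C$, choosing one intermediate bridge and two endpoint bridges together with specific attachment points and internal paths, without ever materialising intermediate cycles. Once this structural shortcut is established, phases (ii) and (iii) reduce to constant-size manipulations plus standard logspace reachability inside $G'$, and the lemma follows.
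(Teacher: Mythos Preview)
Your proposal diverges from the paper's argument precisely at the step you yourself flag as problematic, and the paper's route avoids that step entirely.

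The paper does \emph{not} reduce the induced odd cycle $\bar{C}$ to a triangle. It splits into two cases. If two adjacent bridges of $\bar{C}$ share three attachment points (and neither has a fourth), a short argument shows every bridge conflicting with one conflicts with the other, so $|\bar{C}|=3$; that triangle case is then handled essentially as in your phases (ii)--(iii): keep at most four attachment points per bridge, contract each bridge to a point, obtain a non-planar graph on at most $15$ vertices and $24$ edges, and brute-force a Kuratowski minor. When $|\bar{C}|=2k+1>3$ and all conflicts are witnessed by interleaving pairs, the paper observes that for each bridge $B_i$ a single pair $u_i,v_i$ of attachment points already witnesses both of its conflicts in $\bar{C}$, so each bridge may be replaced in logspace by a single path. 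The attachment points then sit on $C$ in the cyclic order $u_0,v_{2k},u_1,v_0,u_2,v_1,\ldots,u_{2k},v_{2k-1}$, and from this explicit pattern the paper writes down a $K_5$ minor directly (collapsing one contiguous run of vertices to a single branch vertex and reading off the five connecting paths). There is no iteration, no modified cycle, no reduction to three bridges.

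The genuine gap in your argument is phase (i). You assert, without carrying out the case analysis, that rerouting $C$ through a path in $B_2$ produces a cycle $C'$ whose conflict graph again contains an induced odd cycle of strictly smaller length. This is not obvious: the arc of $C$ you discard may carry attachment points of many $B_j$, so bridges can merge or split relative to $C'$, new conflicts can appear, and inducedness of the shorter odd cycle must be re-established each time. More importantly, you correctly note that iterating this $k$ times is not logspace as stated, and your proposed remedy---that the final $C^{*}$ and its three bridges are ``a logspace function of the attachment structure''---is a hope rather than an argument. The paper avoids both the correctness burden and the space issue by never altering $C$: it reads the Kuratowski minor straight off the global cyclic arrangement of the $u_i,v_i$ in a single logspace pass.
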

\begin{proof}
First, suppose that two conflicting bridges $B_{1}$ and $B_{2}$ in the odd cycle share $3$ points of attachment $a_{1},a_{2}$ and $a_{3}$. 
If either of the bridges( $B_{1}$ say.) has another  point of attachment $a_{4}$ then, clearly the points of attachment $a_{1},a_{3}$ of $B_{2}$
alternate with the points of attachment $a_{2},a_{4}$ of $B_{1}$.
Therefore, it is sufficient to deal with the case when the case where both $B_{1}$ and $B_{2}$ have just $3$ points of attachment.
In this case it is not difficult to see that 
any bridge that conflicts with $B_{1}$ also conflicts with $B_{2}$ 
and vice-versa. Hence $\bar{C}$ must be a $3$-cycle. We deal with this case in lemma~\ref{lemma:TriangleConflict}.

Next, consider the case where $\bar{C}$ is an odd cycle of size greater than 
$3$ and the conflict of a bridge $B$ in the odd cycle 
with any other bridge in the odd-cycle is witnessed by $2$ 
 points of attachment of $B$ i.e we exclude the possibility
that $2$ bridges share $3$ points of attachment.

It is easy to see that $\exists$ $2$ points of attachment for every bridge that witness both it's conflicts.
Thus by contracting all bridges (excluding points of attachment) to single points and removing edges of attachment, one can
reduce all bridges to paths while maintaining all conflicts.
This can be done in logspace since for each bridge one only needs to remember the witnesses of conflict of the bridge with its neighbours (in $\bar{C}$).(atmost 8 points.)

If we label the vertices of $\bar{C}$ as $v_{B_0},v_{B_1},\ldots,v_{B_{2k}}$ $(k > 1)$
for some positive integer $k$, and the points of attachment of bridge $B_i$
as $u_i,v_i$ then the points occur along $C$ in the order:
$u_0, v_{2k}, u_1, v_0, u_2, v_1, \ldots, u_{2k-1}, v_{2k-2}, u_{2k}, v_{2k-1}$.

\begin{figure}\label{cycle7}
\centering
\scalebox{0.5}{\includegraphics[width = \textwidth]{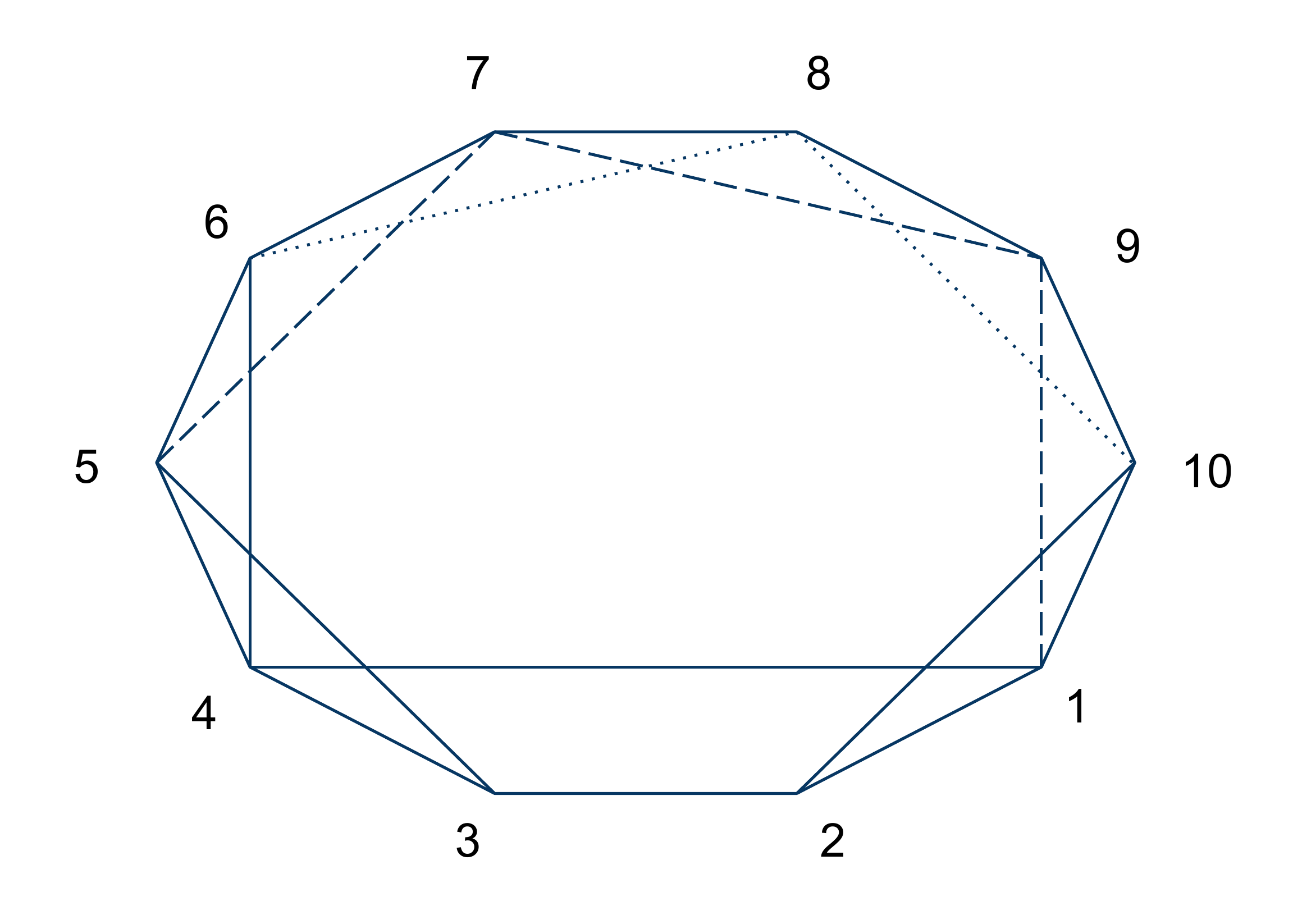}}
\caption{In the figure, we have a cycle with the $9$-cycle as it's conflict graph .
On treating $\{6,8,10\}$ as a single point and leaving out the edges $(6,7),(7,8),(8,9)$ and $(9,10)$, we obtain a subdivision of $K_5$}

\end{figure}
Now, consider the cycle $u_{2k},u_{0},v_{2k},v_{0} ,U'$ 
(Where $U'=\{\{v_1, u_3\}\cup B_3\cup \{v_3\}\cup\ldots \cup \{u_{2k-1}\}$)
along with the bridges  $B_{2k}$ , $B_{0}$ the paths connecting $u_{0}$ and $U'$,
$v_{2k}$ and $U'$ and the path $\{v_{0},u_{2}\} \cup B_{2} \ldots \{v_{2k-2},u_{2k}\}$
Clearly, this is a $K_5$ minor.(see Fig.1 for an example.)
\end{proof}

Finally we are left with the following case: 
\begin{lemma}\label{lemma:TriangleConflict}
Given a cycle in $G'$ with three mutually conflicting bridges, it is possible
to extract a Kuratowski minor of $G'$ in logspace.
\end{lemma}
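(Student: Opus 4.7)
The plan is to reduce $G'$ to a bounded-size topological configuration, locate a Kuratowski subdivision by inspection in the reduced graph, and then expand back. The three bridges $B_1, B_2, B_3$ contain only a constant amount of essential information: for each pair $(B_i, B_j)$, the witness of their conflict consists of either two interleaved attachment points or three shared ones, so each bridge has at most a constant number of \emph{relevant} attachment points on $C$.

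Using logspace reachability inside each $B_i$, I would compute a connecting subgraph (for instance, a union of paths) joining the relevant attachment points of $B_i$. Contracting everything outside these connecting subgraphs leaves each bridge as a small tree-like skeleton. The cycle $C$ together with these three skeletons forms a graph whose topological type depends only on (a) the cyclic ordering of attachment points on $C$, (b) which attachment points of different bridges coincide, and (c) the shapes of the three skeletons---all drawn from a constant-size list. A finite case analysis then produces, for each such configuration, an explicit subdivision of $K_{3,3}$ or $K_5$. In the ``generic'' subcase, all six attachment points are distinct and occur cyclically as $a_1, a_2, a_3, a_1', a_2', a_3'$: these six vertices, together with the six arcs of $C$ between consecutive attachment points and the three bridge paths, form a $K_{3,3}$ subdivision with color classes $\{a_1, a_3, a_2'\}$ and $\{a_2, a_1', a_3'\}$. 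Subcases where attachment points coincide collapse into $K_5$ subdivisions, and the subcases where two bridges share three attachment points are handled in the same spirit as the discussion preceding the lemma. Once the minor is located in the reduced structure, it can be re-expanded through the bridge skeletons to yield a minor of $G'$ itself. Logspace suffices because the algorithm only maintains a constant number of vertex names at any time and relies only on logspace reachability queries inside $G'$.

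The main obstacle is completing the case analysis exhaustively: one must verify that every possible interleaving pattern and every possible coincidence of attachment points among three mutually conflicting bridges actually yields a $K_5$ or $K_{3,3}$ minor. This is essentially the heart of Kuratowski's theorem specialized to our configuration; the analysis is routine but tedious, and constitutes the core combinatorial content of the lemma. The logspace and reduction machinery is straightforward once this finitary classification is in hand.
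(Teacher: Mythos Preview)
Your approach is correct and shares the same opening move as the paper: keep only a constant number of attachment points per bridge (enough to witness the two conflicts), contract each bridge to something of bounded size, and work inside the resulting constant-size object. Where you diverge is in how you finish. You propose an explicit case analysis over all interleaving patterns and coincidences of attachment points, exhibiting a $K_{3,3}$ or $K_5$ subdivision in each case; you correctly flag this as the tedious heart of the argument. The paper sidesteps that tedium entirely: it contracts each bridge (minus its attachment points) to a \emph{single vertex}, observes that the resulting graph has at most $15$ vertices and $24$ edges, notes that it is still non-planar (the three contracted bridges still mutually conflict, so the conflict graph of $C$ still contains an odd cycle), and then simply brute-forces a Kuratowski minor in this constant-size graph. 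Your approach buys an explicit, human-readable construction of the minor; the paper's buys a one-line proof with no casework. Either is fine for a logspace bound, but the paper's version is considerably shorter to write down.
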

\begin{proof}
Clearly, it is sufficient to consider $4$ points of attachment for each bridge since 
$4$ points of attachment suffice to witness conflict with two other bridges.
We can contract the bridges (excluding the points of attachment) to single points and reduce the
problem of finding a Kuratowski subgraph in the above graph to that of finding a Kuratowski minor in a
non-planar graph with atmost $15$ vertices and $24$ edges. 
Since this is a constant sized graph we can find a Kuratowski minor by
brute-force.
\end{proof}

\section*{Acknowledgements}
The first author would like to acknowledge discussions with Eric Allender, 
Abhishek Bhrushundi, Ashish Kabra, Raghav Kulkarni, 
Nutan Limaye, Meena Mahajan, and Prajakta Nimbhorkar on this topic.
\bibliographystyle{alpha}	
\bibliography{planaritytest}
\commentout{
\appendix
\input{appendixProofs}
}
\end{document}